\begin{document}
\frontmatter          
\pagestyle{headings}  
%
%
\mainmatter              
\title{Minimum Number of Test Paths for Prime Path and other Structural Coverage Criteria \thanks{Author's final version. The original publication is available at {https://link.springer.com/chapter/10.1007/978-3-662-44857-1\_5}}}
%
%
\author{Anurag Dwarakanath \and Aruna Jankiti}
\authorrunning{Anurag Dwarakanath et al.} 
%
%
\institute{Accenture Technology Labs, Bangalore, India\\
\email{anurag.dwarakanath@accenture.com, jankiti.aruna@accenture.com}
}

\maketitle              

\begin{abstract}
The software system under test can be modeled as a graph comprising of a set of vertices, \(V\) and a set of edges, \(E\). Test Cases are Test Paths over the graph meeting a particular test criterion. In this paper, we present a method to achieve the minimum number of Test Paths needed to cover different structural coverage criteria. Our method can accommodate Prime Path, Edge-Pair, Simple \& Complete Round Trip, Edge and Node coverage criteria. Our method obtains the optimal solution by transforming the graph into a flow graph and solving the minimum flow problem. We present an algorithm for the minimum flow problem that matches the best known solution complexity of
\begin{math} O\left(\left\vert{V}\right\vert\left\vert{E}\right\vert\right)\end{math}. Our method is evaluated through two sets of tests. In the first, we test against graphs representing actual software. In the second test, we create random graphs of varying complexity. In each test we measure the number of Test Paths, the length of Test Paths, the lower bound on minimum number of Test Paths and the execution time. 

\keywords{Model Based Testing, Minimum Number of Test Paths, Prime Path Coverage, Minimum Flow}
\end{abstract}
\section{Introduction \label{section:1}}
In model based testing, the software artifact describing the system under test (SUT) is abstracted through a model \cite{Ammann_Offutt_2008}. The model can be created from requirements, design or code \cite{Ammann_Offutt_2008}. Models are typically represented in the form of a graph, comprising of vertices and edges. Two special vertices are marked, the source vertex, \(s\), and the sink vertex, \(t\). The graph is then used to generate Test Cases. A Test Case consists of a Test Path, Test Data, and Expected Results. The Test Path is a path in the graph from \(s\) to \(t\). The number of Test Paths needed to test the SUT is determined by the coverage criterion. For example, Node Coverage implies that the set of Test Paths should collectively visit all vertices in the graph. Similarly, Edge Coverage requires the Test Paths to visit all edges in the graph and Prime Path coverage requires the Test Paths to tour a particular set of paths. Prime Path coverage provides a better quality of coverage as it subsumes Node Coverage and Edge Coverage \cite{Ammann_Offutt_2008}. 

Given a particular coverage criterion, it is important to determine a small set of Test Paths which satisfy the criterion. The number and the overall length of Test Paths directly impacts the amount of time needed to manually execute the test cases. Finding the minimum number of Test Paths needed to satisfy a particular  criterion is non-trivial even for the simplest case of node coverage. Using better coverage criterion like Prime Path complicates the problem further since even a small graph can have a large number of Prime Paths \cite{Kaminski_Praphamontripong_Ammann_Offutt_2010}. 

In this paper, we present a method to obtain the optimal solution for the minimum number of Test Paths for different structural coverage criteria. 
The contribution of our work includes \begin{inparaenum} [\itshape a\upshape)] \item a generic method for obtaining the minimal number of Test Paths for different structural coverage criteria; \item identification of a lower bound on the minimum number which the experimental results show to perform very well; \item a new method for computation of minimum number of Test Paths which matches the best known solution complexity.\end{inparaenum}

The identification of the minimum number of Test Paths, although possible in polynomial time, requires a number of graph transformations. We thus test the applicability of our solution against a simple algorithm for Test Path generation.  The tests compare the number and length of Test Paths identified and the algorithm execution time. These metrics are compared on a set of manually created graphs representing actual code and a set of randomly generated graphs. 

The paper is structured as follows. We present the related work in Section \ref{section:2}. The method of generating the minimum Test Paths is presented in Section \ref{section:3}. We present the experimental results in Section \ref{section:4} and conclude in Section \ref{section:5}.

\section{Related Work \label{section:2}}
The input for the identification of minimum Test Paths is a directed graph \(G_{1}=\left(V_{1},E_{1}\right)\), where \(V_{1}\) represents the vertex set and \(E_{1}\) represents the edge set. \(G_{1}\) is typically a control flow graph of the SUT. \(G_{1}\) may contain cycles.  \(V_{1}\) contains the source vertex, \(s\), and the sink vertex, \(t\). \(s\) conceptually depicts the point where the execution of a Test Case begins and \(t\) denotes the point where the execution ends. The in-degree of \(s\) is \(0\), and the out-degree of \(t\) is \(0\). We focus on the case where \(\left\vert{s}\right\vert=\left\vert{t}\right\vert=1\). There is no loss in generality since a graph with more than one \(s\) or \(t\) can be converted to a graph with a single \(s\) and a single \(t\), by adding two new vertices, \(s^{'}\), \(t^{'}\) with edges between \(s^{'}\) and every source, and edges between every sink and \(t^{'}\).

A path is a sequence of vertices \(v_{0},\ldots,v_{r}\) with a sequence of edges \(e_{0},\ldots,\allowbreak e_{\left(r-1\right)}\), where \(e_{i}=\left(v_{i} \allowbreak,v_{\left(i+1\right)}\right)\), \(v_{i} \in V_{1}\), \(e_{i} \in E_{1} \forall i\). An \(s-t\) path is a path where \(v_{0}=s\) and \(v_{r}=t\). All Test Paths are thus \(s-t\) paths. A path, \(p\) is said to visit vertex \(k\) (or edge \(e\)) if \(k\) (or \(e\)) is in \(p\).  A path \(p\) is said to tour (or cover) a path \(q\), if \(q\) is a sub-path of \(p\). A path is simple if no vertex is visited more than once with the exception that the first and last nodes may be identical \cite{Ammann_Offutt_2008}. A Prime Path is a simple path that is not a sub-path of any other simple path \cite{Ammann_Offutt_2008}. 

A test requirement is a specific aspect that a Test Path satisfies. \(TR\) denotes the set of test requirements. For example, for Prime Path coverage, \(TR\) = set of Prime Paths. For Edge Coverage, \(TR\) = the set of edges, etc.

A simple solution approach for the generation of Test Paths that cover all Prime Paths has been presented in \cite{Ammann_Offutt_2008}. The solution `extends' every Prime Path to visit \(s\) and \(t\) thus forming a Test Path. The algorithm does not attempt to minimize the number of Test Paths but is extremely fast in execution.

The problem of minimizing Test Paths for Prime Path coverage has been recently studied in \cite{Li_Li_Offutt_2012}. The authors formulate the problem as a variant of the shortest superstring problem, which is NP-complete \cite{Li_Li_Offutt_2012}. The authors then use known approximation algorithms to solve the problem. In our work, we formulate the problem such that the optimal solution is obtained in polynomial time.

While the work in \cite{Li_Li_Offutt_2012} specifically deals with Prime Paths, the problem of minimum Test Path to cover a given set of paths (not necessarily Prime Paths) has been studied earlier in \cite{Ntafos_Hakimi_1979}. Here, the given graph, \(G_{1}\), is transformed into a flow network such that the minimum flow gives the minimum number of Test Paths. The authors use a generic algorithm for minimum flow. One such simple algorithm for minimum flow is the Decreasing Path algorithm \cite{Ciurea_Ciupala_2004} which leads to the solution in \(O\left(\left\vert{V}\right\vert^2\left\vert{E}\right\vert\right)\). However, the methodology in \cite{Ntafos_Hakimi_1979} has an inaccuracy which can lead to incorrect results. Consider the graphs in Fig. \ref{fig:1}. The Prime Paths of Fig. \ref{fig:1}(a) are \(\{1,2,3\}\) and \(\{2,2\}\). The technique in \cite{Ntafos_Hakimi_1979} makes the graph acyclic by introducing a new vertex \(2^{'}\) in place of the strongly connected component of \(2 - 2\). The minimum flow analysis is then performed on the acyclic graph Fig. \ref{fig:1}(b). This results in a minimum number of Test Paths as \(1\) – the path being \(1,2^{'},3\). Replacing \(2^{'}\), we get the path as \(1,2,2,3\) to correspond to Fig. \ref{fig:1}(a). However, the minimum number of paths to cover the Prime Paths is \(2\) - \(\{1,2,3\}\) and \(\{1,2,2,3\}\). Our method overcomes this inaccuracy by appropriately handling strongly connected components. Further, our solution of minimum Test Paths is computed in \(O\left(\left\vert{V}\right\vert\left\vert{E}\right\vert\right)\).

\begin{figure}
		\centering
		\includegraphics[height=1.0cm]{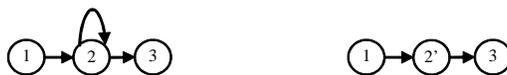}
		\caption{(a) Graph with a cycle (b) reduced to an acyclic graph.}
		\label{fig:1}
\end{figure}

The authors of \cite{Aho_Lee_1987} tailor a minimum flow algorithm specifically for minimum Test Paths for node coverage leading to the currently best known solution complexity of \begin{math}O\left(\left\vert{V}\right\vert\left\vert{E}\right\vert\right)\end{math}. Our algorithm in this paper also achieves this complexity. However, our technique is different and is based on the concept of Decreasing Paths \cite{Ciurea_Ciupala_2004}. 
The concept is similar to that of the Augmenting Path in the Ford-Fulkerson algorithm \cite{ford1962flows} to find the maximum flow. The Augmenting Path concept is well studied and numerous explanatory material is available. This allows our algorithm based on the Decreasing Path to be easily understood and conceptually simple. 
Further, our method is generic and handles different coverage criteria including node coverage.

Table \ref{table:contribution} summarizes the distinction of our method against related work.
\begin{table}
\caption{Comparison of our solution against related work}
\label{table:contribution}
\begin{tabular}{ | >{\centering\arraybackslash} m{0.12\textwidth} | >{\centering\arraybackslash} m{0.43\textwidth} | >{\centering\arraybackslash} m{0.40\textwidth} | }
	\hline
	Solution & Coverage Criteria & Comments \\
	\hline
	\cite{Ammann_Offutt_2008} & Prime Path & Does not attempt to minimize\\
	\hline
	\cite{Aho_Lee_1987} & Node Coverage & Minimization in \(O\left( \vert{V}\vert \vert{E}\vert\right)\) \\
	\hline
	\cite{Ntafos_Hakimi_1979} & Different structural coverage criteria & In-accuracy in solution. Minimization in \(O\left( {\vert{V}\vert}^2 \vert{E}\vert \right)\) \\
	\hline
	\cite{Li_Li_Offutt_2012} & Different structural coverage criteria & Heuristic solution (non-optimal) \\
	\hline
	Our Work & Different structural coverage criteria & Minimization in \(O\left( \vert{V}\vert \vert{E}\vert\right)\) \\
	\hline
\end{tabular}
\end{table}

Our method to obtain the minimum number of Test Paths is presented below.

\section{Generating the minimum number of Test Paths \label{section:3}}
Given a graph \(G_1=\left(V_{1},E_{1}\right)\), we need to find the minimum number of Test Paths that cover the set of test requirements \(TR\). We focus for the case of \(TR\) = set of Prime Paths of \(G_{1}\) and cover other coverage criteria in section \ref{section:OthCovCrit}. The minimum Test Paths are identified through a series of steps as shown below.
\begin{enumerate}
	\item First, the set of Prime Paths, \(P\) of \(G_{1}\) is computed. We present a method to obtain the lower bound on the minimum number of Test Paths.
	\item \(G_{1}\) is converted into a transform graph \(G_{2}=\left(V_{2},E_{2}\right)\), where \(V_{2}\) is the set of Prime Paths, \(P=\{p_{1},p_{2}, \ldots ,p_{n}\}\) and \( \left( p_{1},p_{2} \right) \in E_{2}\) if a path exists from \(p_{1}\) to \(p_{2}\) on \(G_{1}\) such that the path does not include any other Prime Paths (i.e. other than \(p_{1}\) and \(p_{2}\)).
	\item The transform graph \(G_{2}\) is made into an acyclic graph, \(G_{3}\). The work in \cite{Ntafos_Hakimi_1979} removed cycles from the original graph, \(G_{1}\), instead of the transform graph, \(G_{2}\). This inaccuracy in \cite{Ntafos_Hakimi_1979} can lead to incorrect results (refer Fig.~\ref{fig:1}).
	\item \(G_{3}\) is converted into a flow graph, \(G_{4}\), with new vertices and edges introduced. Every edge is annotated with lower bounds and capacities of flows. The minimum flow on \(G_{4}\) is computed through a two-step process. First, \(G_{4}\) is initialized with a feasible flow. We present a novel initialization algorithm in this paper. Second, the generic Decreasing Path algorithm \cite{Ciurea_Ciupala_2004} is used to find the minimum flow. Our initialization algorithm ensures the Decreasing Path algorithm can compute with a complexity of \begin{math}O\left(\left\vert{V}\right\vert\left\vert{E}\right\vert\right)\end{math}.
	\item The minimum flow in \(G_{4}\) is now interpreted as Test Paths in \(G_{1}\). 
\end{enumerate}
We detail each step of the process below. We will also use Fig.~\ref{fig:2} as a running example to help explain the algorithms used. 

\subsection{Generating the set of Prime Paths}
Algorithm \ref{algo:1} generates the set of test requirements \(TR\) (Prime Paths in this case). We use known methods (\cite{Ammann_Offutt_2008} and \cite{Ammann_Offutt_Web}) to compute the set of Prime Paths.

For the example graph in Fig. \ref{fig:2}, there are \(10\) Prime Paths: \(P=\{p_{0}=\{s,1,3,4,5\}, p_{1}=\{3,4,1,2,t\}, \allowbreak p_{2}=\{5,4,1,2,t\},\allowbreak p_{3}=\{1,3,4,1\},\allowbreak p_{4}=\{s,1,2,t\},\allowbreak p_{5}=\{3,4,1,3\},\allowbreak p_{6}=\{5,4,1,3\},\allowbreak p_{7}=\{4,1,3,4\},\allowbreak p_{8}=\{5,4,5\},\allowbreak p_{9}=\{4,5,4\}\}\).

\begin{algorithm}[H]
	\caption {Generating the set of Prime Paths}
	\label {algo:1}
	\KwIn{\(G_{1}=\left(V_{1},E_{1}\right)\), with \(\{s,t\} \in V_{1}\)}
	\KwOut{Set of Prime Paths, \(P=\{p_{1},p_{2}, \ldots ,p_{n}\}\)}
	initialize \(P^{'} = \{p_{1}, p_{2}, \ldots, p_{n} \} = \{e_{1}, e_{2}, \ldots, e_{n}\}, \forall e_{i} \in E_{1} \)\;
	\While{\(p_{i} \in P^{'} \forall i\) is not explored} 
	{
		\If{ \( p_{i}\) is not a cycle}
		{
			\If { \(p_{i}\) can be extended by edge \(e \in E_{1}\)}
			{
				\If {\(e\) does not visit a vertex in \(p_{i}\)} 
				{
					\( p_{i} += e\)\;
				}
			}
		}		
	}
	sort \(P^{'}\) in ascending order of size\;
	\For {\(i = \left(\vert{P^{'}}\vert\right)\) \KwTo \(1\) }
	{
		\If{ \(p_{i}\) is not a sub-path of any other path in \(P\)}
		{
			add \(p_{i}\) into \(P\)\;
		}
	}
	output \(P\)\;		
\end{algorithm}

We define four categories of Prime Paths. \(Type\,S\) are those Prime Paths that visit the vertex \(s\). \(Type\,T\) are those that visit the vertex \(t\). \(Type\,C\) are those Prime Paths that are cycles, and \(Type\,P\) are those that are simple paths and do not visit either \(s\) or \(t\). In the given example, we have the following cardinality - \(\vert{Type\,S}\vert = 2\), \(\vert{Type\,T}\vert = 3\), \(\vert{Type\,C}\vert = 5\), \(\vert{Type\,P}\vert = 1\). These categories hold for other test requirements (i.e. Nodes, Edges, Edge-pair, etc) as well.

\begin{lemma}
	\label{lemma:1}
	\(\max \left(\vert{Type \,S}\vert, \vert{Type \,T}\vert\right) \) is a lower bound on the minimum number of Test Paths for Prime Path coverage.
\end{lemma}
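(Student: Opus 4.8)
The plan is to prove the two inequalities $(\text{number of Test Paths}) \ge \vert{Type\,S}\vert$ and $(\text{number of Test Paths}) \ge \vert{Type\,T}\vert$ separately; taking the larger of the two then yields the claimed bound. By symmetry (reversing every edge exchanges the roles of $s$ and $t$, turning $Type\,S$ Prime Paths into $Type\,T$ ones while preserving the sub-path/covering relation), it suffices to establish the first inequality and argue that the second follows dually. The core claim I would prove is that \emph{a single Test Path can tour at most one $Type\,S$ Prime Path}. Granting this claim, any family of Test Paths that tours all $\vert{Type\,S}\vert$ distinct $Type\,S$ Prime Paths must contain at least $\vert{Type\,S}\vert$ members, which is exactly what is needed.

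To prove the core claim I would first record two structural facts. Since $s$ has in-degree $0$, no edge enters $s$; hence in any path $s$ can occur only as the first vertex, and in particular every $s$--$t$ Test Path contains $s$ exactly once, at position $0$. The same in-degree fact shows that any $Type\,S$ Prime Path $q$ (one that visits $s$) must begin at $s$. Now suppose a Test Path $P$ tours $q$, i.e. $q$ is a contiguous sub-path of $P$. The occurrence of $q$ inside $P$ must align the initial vertex $s$ of $q$ with an occurrence of $s$ in $P$; since $P$ meets $s$ only at position $0$, the sub-path $q$ is forced to be a \emph{prefix} of $P$.

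The crux --- and the step I expect to be the main obstacle --- is ruling out two \emph{distinct} $Type\,S$ Prime Paths being toured by the same $P$. Suppose $q_1$ and $q_2$ are both toured by $P$; by the previous paragraph both are prefixes of $P$, so (prefixes of a fixed sequence being nested) one is a prefix of the other, say $q_1$ is a proper prefix of $q_2$. Then $q_1$ is a proper sub-path of the simple path $q_2$, contradicting the defining property of a Prime Path that it is not a sub-path of any other simple path. Hence $q_1 = q_2$, establishing the core claim.

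Finally I would invoke the dual argument: because $t$ has out-degree $0$, $t$ occurs only as the last vertex of any path, each $Type\,T$ Prime Path ends at $t$, and the identical reasoning (now with suffixes in place of prefixes) shows each Test Path tours at most one $Type\,T$ Prime Path, giving $(\text{number of Test Paths}) \ge \vert{Type\,T}\vert$. Combining the two inequalities, the minimum number of Test Paths is at least $\max\left(\vert{Type\,S}\vert, \vert{Type\,T}\vert\right)$. The only point requiring care is the primality appeal in the crux step: I should confirm that it is precisely the \emph{maximality} of $q_1$ and $q_2$ as Prime Paths that forbids the nesting, so that the argument genuinely separates distinct Prime Paths rather than accidentally excluding a legitimate configuration such as a long Prime Path together with a shorter non-prime prefix.
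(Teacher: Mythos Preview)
Your proof is correct and follows essentially the same approach as the paper: both argue that the degree-$0$ condition at $s$ (respectively $t$) forces any Type~$S$ (Type~$T$) Prime Path toured by a Test Path to start (end) at the unique occurrence of $s$ (of $t$), so that at most one such Prime Path can be toured per Test Path. Your write-up is in fact more careful than the paper's, since you make explicit the appeal to primality (two distinct prefixes/suffixes would force one Prime Path to be a sub-path of another), whereas the paper leaves this step implicit.
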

\begin{proof}
	Consider \(P_{T}=\{p_{1},p_{2}, \ldots,p_{n}\}\) to be a set of \(Type\,T\) Prime Paths. Since each of these Prime Paths visits vertex \(t\), the last node of every Prime Path \(p_{i} \in P_{T}\) is the vertex \(t\). Now, consider an \(s-t\) path that tours \(p_{1}\). This \(s-t\) path cannot tour any other Prime Path of \(P_{T}\) after touring \(p_{1}\) since the vertex \(t\) has an out-degree of \(0\). Thus, there cannot be any path in the graph where more than \(1\) Prime Paths of \(P_{T}\) can be toured simultaneously. Therefore, there would be at-least \(Type\,T\) \(s-t\) paths to cover all Type T Prime Paths. Similarly, there would be atleast \(Type\,S\) \(s-t\) paths to cover all \(Type\,S\) Prime Paths. Thus the number of s-t paths that will cover all Prime Paths will be atleast as much as \(\max \left(\vert{Type \,S}\vert, \vert{Type \,T}\vert\right) \). \qed
\end{proof}

The categorization of the Prime Paths can be achieved by checking the first and last vertex of every Prime Path. This can be done in \(O\left(\vert{P}\vert\right)\). For the example graph of Fig. \ref{fig:2}, the lower bound for the minimum number of Test Paths for Prime Path coverage \(= \vert Type\,T \vert = 3\).

\begin{figure}
	\begin{minipage}{0.5\linewidth}
		\centering
		\includegraphics[height=3.4cm]{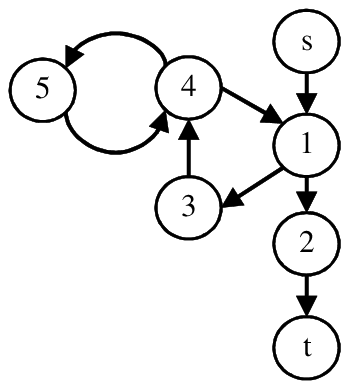}
		\caption{Example Input Graph, \(G_{1}\).}	
		\label{fig:2}
	\end{minipage}%
	\begin{minipage}{0.5\linewidth}
		\includegraphics[height=3.4cm]{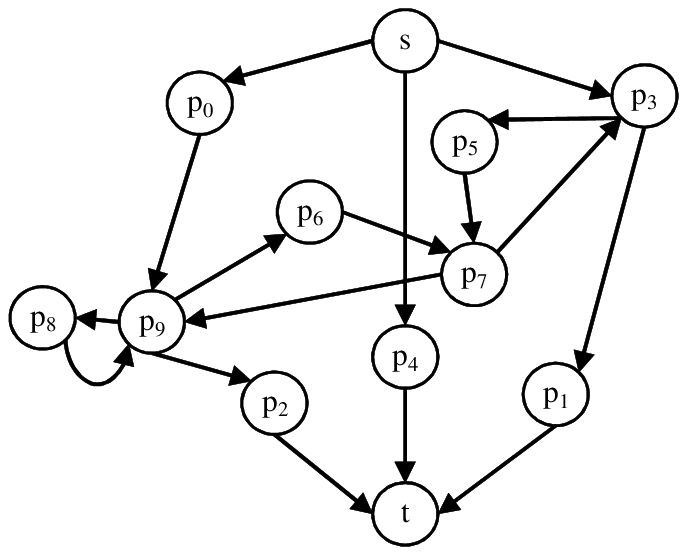}
		\caption{Transform Graph, \(G_{2}\) of \(G_{1}\).}
		\label{fig:3}
	\end{minipage}
\end{figure}

\subsection{Generating the Transform Graph} \label{section:TransGrph}
The Prime Paths are represented as vertices in a new graph, \(G_{2}\). Edges are placed between two vertices in \(G_{2}\) if a path in \(G_{1}\) can tour the two Prime Paths represented by the vertices. The problem of Prime Path coverage is thus transformed into a problem of node coverage (i.e. we want to identify \(s-t\) paths such that all vertices of \(G_{2}\) are covered). 
The algorithm for generation of the transform graph is provided below.

\begin{algorithm}[H]
	\caption {Generation of the Transform Graph}
	\label {algo:2}
	\KwIn{\(G_{1}, P\)}
	\KwOut{Transform Graph \(G_{2}=\left(V_{2}, E_{2}\right)\)}
	create new vertex \(s,t\) in \(V_{2}\); create new vertex \(v_{i}\) in \(V_{2}, \forall i \in P\) \;
	\ForEach{\(p_{i} \in \left(P + \{s\}\right)\)}
	{
		\ForEach{\(p_{j} \in \left(P - p_{i} + \{t\}\right)\)}
		{
			\(Path_{ij}=p_{i} \cup p_{j} \) \label{algo2:cup} \;
			\If{\(Path_{ij}\) contains only the prime paths \(p_{i}\) and \(p_{j}\) \label{algo2:ppCheck}}
			{
				add edge \(\left(p_{i},p_{j}\right)\) into \(E_{2}\)\;
			}
		}
	}
	output \(G_{2}=\left(V_{2},E_{2}\right)\)\;
\end{algorithm}

The Operator \(\cup\) in Step \ref{algo2:cup} works as follows. Consider the case of \(p_{i}=\{4,5,4\}\) and \(p_{j}=\{5,4,5\}\), then \(p_{i} \cup p_{j} = \{4, 5, 4, 5\} \) by observing that the paths \(p_{i}\) and \(p_{j}\) have overlapping vertices. In case of no overlap, \(p_{i} \cup p_{j} = \{p_{i} + \text{\{shortest path from the last node of }p_{i} \text{ to the first node of }p_{j} \text{\}} + p_{j}\}\).
Checking for the shortest path in \(G_{1}\) can be done using breadth first search with a complexity of \(O\left(E_{1}\right)\). Checking for the presence of a Prime Path in \(Path_{ij}\) (step \ref{algo2:ppCheck}) can be done in \(O\left(\vert{P}\vert\right)\). Thus, the transform graph can be computed in \(O\left({\vert{P}\vert}^{2} * \max\left(\vert{P}\vert,\vert{E_{1}}\vert\right)\right)\). The transform graph of Fig. \ref{fig:2} is shown in Fig. \ref{fig:3}.

An incomparable vertex set, \(I\), is one where for every pair of vertices \(v_{i},v_{j} \in I; v_{i}\) does not reach \(v_{j}\). From Fig. \ref{fig:3}, it can be visually seen that the three \(Type\,T\) Prime Paths, \(p_{1},p_{2},p_{4}\), form an incomparable vertex set, i.e.  \(I=\{p_{1},p_{2},p_{4}\}\). The maximum incomparable vertex set, \(I_{max}\) equals the minimum number of Test Paths through Dilworth's theorem for acyclic directed graphs and the proof in \cite{Ntafos_Hakimi_1979} for general graphs. From Lemma \ref{lemma:1}, \(\vert{I_{max}}\vert \geq \vert{I}\vert \).

\subsection{Removing Cycles}
The transform graph, \(G_{2}\), may have cycles. The minimum flow algorithm works over a directed acyclic graph and thus the cycles need to be removed. We achieve this by replacing a cycle with a new vertex. All incoming edges of any node in the cycle become incoming edges of the new vertex. Similarly, any outgoing edge of any vertex in the cycle become outgoing edges of the new vertex. Algorithm \ref{algo:3} presents the generation of the acyclic transform graph.
\begin{algorithm}[H]
	\caption {Reduction of Cycles in Transform Graph}
	\label {algo:3}
	\KwIn{\(G_{2}\)}
	\KwOut{Acyclic Transform Graph \(G_{3}=\left(V_{3},E_{3}\right)\)}
	initialize \(V_{3}=V_{2},E_{3}=E_{2},G_{3}=\left(V_{3},E_{3}\right)\)\;
	\While{\(G_{3}\) contains a cycle}
	{
		let \(p_{c}=\{v_{1},v_{2},\ldots,v_{n},v_{1}\}\) be a cycle\;
		record vertices \(v_{i} \in V_{3}\) which have an edge in \(E_{3}\) with vertices in \(p_{c}\)\; \label{algo3:recordInfo}
		remove vertices \(\{v_{1},v_{2},\ldots,v_{n}\}\) from \(V_{3}\); create new vertex \(v_{new}\) in \(V_{3}\)\;
		\ForEach{vertex \(v_{i} \in p_{c}\) \label{vertexReductionStart}}
		{
			\If {\(v_{i}\) has an edge from vertex \(v_{k}\) in \(G_{2}\) where \(v_{k} \in V_{3}-p_{c}\) }
			{
				remove edge \(\left(v_{k},v_{i}\right)\) from \(E_{3}\) \& create edge \(\left(v_{k},v_{new}\right)\) in \(E_{3}\)\;
			}
			\If{\(v_{i}\) has an edge to vertex \(v_{k}\) in \(G_{2}\) where \(v_{k} \in V_{3}-p_{c}\)}
			{
				remove edge \(\left(v_{i},v_{k}\right)\) from \(E_{3}\) \& create edge \(\left(v_{new},v_{k}\right)\) in \(E_{3}\)\;
			}
		  }\label{vertexReductionEnd} 
	}
	output \(G_{3}=\left(V_{3},E_{3}\right)\)\;
\end{algorithm}

Algorithm \ref{algo:3} works as follows. For the given transform graph, the cycles are computed using a trivial variant of Algorithm \ref{algo:1}. If the graph contains a cycle, any cycle is chosen and reduced to a new vertex. On the resulting graph, the cycles are identified again and the procedure is repeated. 
The complexity of steps \ref{vertexReductionStart} - \ref{vertexReductionEnd} is \(O\left(\vert{V_{2}}\vert\right)\) by assuming that the chosen cycle visits every vertex. Since, \(V_{2}\) represents the Prime Paths, this complexity can be written as \(O\left(\vert{P}\vert\right)\). If we assume every Prime Path is of \(Type\,C\) and reducing one cycle does not impact other Prime Paths, then the complexity of the entire algorithm can be said to be \(O\left({\vert{P}\vert * \max{\left(\vert{P}\vert, {P_{c}}\right)}}\right)\), where \(P_{c}\) is the complexity of finding a cycle.
After removing all the cycles, the acyclic transform graph is shown in Fig. \ref{fig:4}. 

\subsection{Generating the Minimum Flow}
The transform graph, \(G_{3}\), at this stage is a directed acyclic graph. We convert this graph into a flow graph. Each vertex \(v_{i} \in V_{3}-\{s,t\}\) is split into two vertices \(v_{i}^{+}\), \(v_{i}^{++}\). Let this new vertex set be represented as \(V_{4}\). New edges \(\left(v_{i}^{+},v_{i}^{++}\right)\) are also added. All incoming edges of \(v_{i}\) are made incoming edges of \(v_{i}^{+}\) and all outgoing edges of \(v_{i}\) are made outgoing edges of \(v_{i}^{++}\). Let this new edge set be represented as \(E_{4}\). Every edge \(\left(i,j\right) \in E_{4} \, \& \, i,j \in V_{4}\) is associated with a lower bound \(\left(l_{ij}\right)\) for a flow and an edge capacity \(\left(c_{ij}\right)\) as follows: 

\[
	l_{ij}= 
		\begin{cases}
			1 & \text{if} \left(i,j\right)= \left(v_{i}^{+},v_{i}^{++}\right) \\
			0 & \text{otherwise}
		\end{cases}
\]

\[
	c_{ij}= \frac{\vert{V_{4}}\vert}{2} - 1 		
\]

Let the resulting graph be represented as \(G_{4}=\left(V_{4},E_{4},L,C\right)\), where \(L\) is the set of lower bounds and \(C\) is the set of capacities. The introduction of new vertices, new edges and the flow requirements are pictorially depicted in Fig.\ref{fig:5}.

A feasible flow in the flow network is an assignment of a non-negative value, \(f_{ij}\) for every edge such that the following conditions hold. 
\begin{equation}
	\label{eq:1}
	f_{ij} \geq l_{ij} \,\&\, f_{ij} \leq c_{ij} , \forall \left(i,j\right) \in E_{4} \,\&\, i,j \in V_{4}
\end{equation}
\begin{equation}
	\label{eq:2}
	\sum_{i} f_{ij} = \sum_{j} f_{ji} ,\forall \left(i,j\right) \in E_{4} \,\&\, i,j \in V_{4}-\{s,t\}
\end{equation}
The flow of the network, \(f\), is defined as:
\begin{equation}
	\label{eq:3}
	f = \sum_j f_{sj} = \sum_i f_{it} , \forall \left(i,j\right) \in E_{4} \,\&\, i,j \in V_{4}
\end{equation}

The minimum flow,\(f_{min}\), is the least amount of feasible flow possible.

A flow in the network \(G_{4}\) can be mapped to an \(s-t\) path in \(G_{1}\). The flow conditions placed in \(G_{4}\)  ensures the edge \(\left(v_{i}^{+},v_{i}^{++}\right)\) is chosen in at-least one flow. Since this corresponds to every vertex \(v_{i} \in V_{3}-\{s,t\}\), the minimum flow will ensure every vertex of \(G_{3}\) is covered. By expanding vertexes placed in lieu of cycles in \(G_{3}\) , every vertex of \(G_{2}\) is covered. This ensures that every Prime Path of \(G_{1}\) is toured.

Every edge in \(G_{4}\)  is also annotated with a capacity, \(c_{ij}\), which represents the maximum flow that may be carried on the edge. The capacity is not directly interpretable from the Test Path perspective; however, the minimum flow algorithm requires the specification of a capacity. In the general case of computing the minimum flow in a network, the capacity determines if a feasible flow exists. In our case, we set the capacity in such a way that it guarantees a feasible flow.

\begin{theorem}
	\label{theorem:1}
	The capacity on an edge in \(G_{4}\) can be set as \(\frac{\vert{V_{4}}\vert}{2} - 1\). 
\end{theorem}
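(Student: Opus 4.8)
The plan is to read the theorem as a sufficiency claim: with the capacity fixed at \(\frac{\vert V_4\vert}{2}-1\) on every edge of \(G_4\), a feasible flow (in the sense of Equations (\ref{eq:1})--(\ref{eq:3})) is guaranteed to exist. I would establish this by exhibiting an explicit feasible flow and verifying that it never exceeds the stated capacity. First I would rewrite \(\frac{\vert V_4\vert}{2}-1\) in terms of \(G_3\). Since each internal vertex \(v_i \in V_3-\{s,t\}\) is split into \(v_i^{+},v_i^{++}\) while \(s\) and \(t\) are left intact, we have \(\vert V_4\vert = 2\left(\vert V_3\vert-2\right)+2\), hence \(\frac{\vert V_4\vert}{2}-1 = \vert V_3\vert-2\), which is exactly the number of internal vertices. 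This identification is the conceptual pivot of the whole argument.

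Next I would isolate what feasibility actually demands. The only binding lower bounds sit on the split-edges \(\left(v_i^{+},v_i^{++}\right)\), each requiring at least one unit of flow; every other edge has lower bound \(0\). Thus a feasible flow is precisely a nonnegative, conservation-respecting \(s\)--\(t\) flow that pushes at least one unit through every split-edge. I would then construct such a flow by superposition of \(s\)--\(t\) paths: because \(G_3\) (and therefore \(G_4\)) is a DAG in which every vertex lies on some \(s\)--\(t\) path, for each internal vertex \(v_i\) I can select an \(s\)--\(t\) path \(\pi_i\) in \(G_4\) that traverses \(\left(v_i^{+},v_i^{++}\right)\). Assigning one unit of flow to each \(\pi_i\) and summing yields a flow \(f\). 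Conservation holds automatically since each \(\pi_i\) is a single-unit \(s\)--\(t\) path, and each split-edge carries at least the unit contributed by its own \(\pi_i\), so the lower-bound conditions of Equation (\ref{eq:1}) are met.

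Finally I would bound the congestion. The construction uses at most \(\vert V_3\vert-2\) paths, one per internal vertex, so the total flow value satisfies \(f \leq \vert V_3\vert-2\). In an acyclic network any flow decomposes into \(s\)--\(t\) paths that repeat no edge, so the flow on any single edge is at most the number of such paths; consequently \(f_{ij} \leq f \leq \vert V_3\vert-2 = \frac{\vert V_4\vert}{2}-1 = c_{ij}\) for every edge. Thus \(f\) simultaneously respects the lower bounds, conservation, and the capacity, so it is feasible, which proves that the capacity \(\frac{\vert V_4\vert}{2}-1\) suffices.

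The step I expect to be the main obstacle is the well-definedness of the construction, namely justifying that every internal vertex of \(G_4\) genuinely lies on some \(s\)--\(t\) path, so that each \(\pi_i\) can be chosen. I would trace this back to the way edges incident to \(s\) and \(t\) are inserted in Algorithm \ref{algo:2} together with the cycle-reduction of Algorithm \ref{algo:3}, arguing reachability from \(s\) and to \(t\) is preserved; the vertex counting and the acyclic path-decomposition bound are then routine.
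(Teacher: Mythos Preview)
Your proposal is correct and follows essentially the same line as the paper: both identify \(\frac{\vert V_4\vert}{2}-1=\vert V_3\vert-2\) as the number of internal vertices, build a feasible flow as the superposition of one \(s\)--\(t\) path per internal vertex, and conclude that no edge ever carries more than \(\vert V_3\vert-2\) units. Your write-up is considerably more explicit than the paper's (you actually verify conservation, lower bounds, and the per-edge capacity bound, and you flag the reachability assumption), but the underlying argument is the same.
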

\begin{proof}
Consider the graph \(G_{3}\). The maximum number of \(s-t\) paths possible on \(G_{3}\) will be \(\left(\vert{V_{3}}\vert - 2 \right)\). This is because, the maximum will occur when every vertex of \(G_{3}-\{s,t\}\) leads to a new \(s-t\) path. Now, new vertices have been added in \(G_{4}\), but none of the new vertices will contribute to a new path over and above those that are possible in \(G_{3}\). Thus, the maximum capacity of every edge in \(G_{4}\) can be set as \(\left(\vert{V_{3}}\vert-2\right)\). But since \(\vert{V_{4}}\vert = 2 \ast \vert{V_{3}}\vert - 2 \Rightarrow \left(\vert {V_{3}} \vert -2 \right) =  \frac{\vert{V_{4}}\vert}{2} - 1\). \qed
\end{proof}

\begin{figure}
	\begin{minipage}{0.5\linewidth}
		\centering 
		\includegraphics[height=3.3cm]{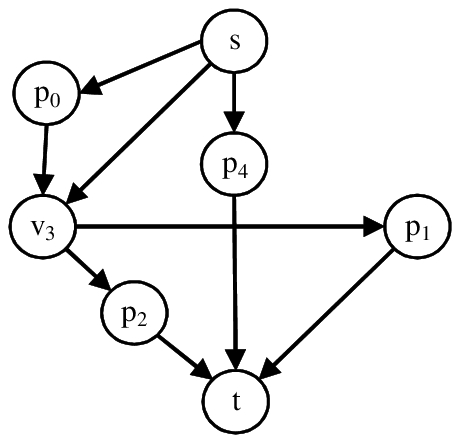}
		\caption{Creating an acyclic directed graph, \(G_{3}\) from \(G_{2}\).}
		\label{fig:4}
	\end{minipage}
	\begin{minipage}{0.5\linewidth}
		\centering 
		\includegraphics[height=3.3cm]{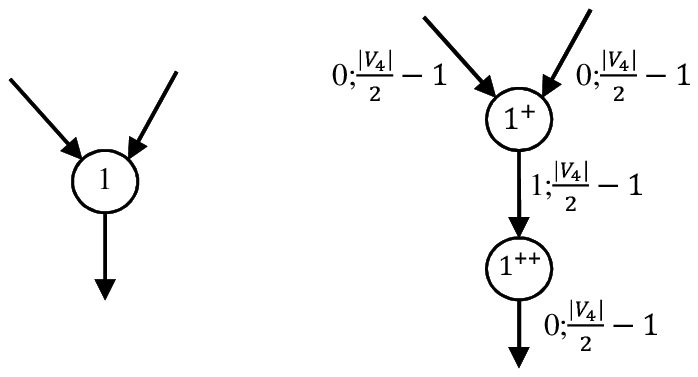}
		\caption{Vertex Splitting and flow requirements represented as lower bound; capacity}
		\label{fig:5}
	\end{minipage}
\end{figure}

Once the flow graph has been created, we determine the minimum flow through the technique similar to \cite{Ciurea_Ciupala_2004}. 

The minimum flow algorithm works in two steps, 
\begin{inparaenum} [\itshape a\upshape)]
\item an initial flow is placed on the flow graph such that all flow requirements (\ref{eq:1} \& \ref{eq:2}) are met; and 
\item the minimum flow is computed using the decreasing path algorithm. Using the algorithms for a) and b) as detailed in \cite{Ciurea_Ciupala_2004} over the network \(G_{4}\) provides a method with a complexity of \(O\left(\left\vert{V_{4}}\right\vert\left\vert{E_{4}}\right\vert c_{max}\right)\), where \(c_{max}\) represents the maximum value of the capacity, \(C\). From Theorem \ref{theorem:1}, \(c_{max}=O\left(\vert V_{4} \vert\right)\). Therefore, the complexity of the method in \cite{Ciurea_Ciupala_2004} becomes \(O\left({\left\vert{V_{4}}\right\vert}^2 \left\vert{E_{4}}\right\vert\right)\). This complexity can be generalized as \(O\left({\left\vert{V}\right\vert}^2 \left\vert{E}\right\vert\right)\).  
\end{inparaenum}
Our work in this paper introduces a novel initialization algorithm (i.e. aspect a)) such that the complexity of computing the minimum flow using the standard decreasing path algorithm becomes \(O\left(\left\vert{V_{4}}\right\vert\left\vert{E_{4}}\right\vert\right)\). 
This complexity can be generalized to \(O\left(\left\vert{V}\right\vert\left\vert{E}\right\vert\right)\). 

\subsubsection{Initialization with a feasible flow}
Given the graph \(G_{4}=\left(V_{4},E_{4},L,C\right)\), Algorithm \ref{algo:4} places an initial flow such that the flow requirements \ref{eq:1} \& \ref{eq:2} are met. 

\begin{algorithm}[H]
	\caption {Initialization with a feasible flow}
	\label {algo:4}
	\KwIn{\(G_{4}=\left(V_{4},E_{4},L,C\right)\)}
	\KwOut{feasible flow \( f_{ij} , \forall \left(i,j\right) \in E_{4} \,\&\, i,j \in V_{4}\)}
	initialize \(f_{ij}=0,\forall \left(i,j\right) \in E_{4} \,\&\, i,j \in V_{4}\)\;
	\ForEach{ vertex \(i \in V_{4}-\{s,t\}\) }
	{
		find path, \(p_{s}\), from \(s\) to \(i\) using breadth-first-search\;
		find path, \(p_{t}\), from \(i\) to \(t\) using breadth-first-search\;
		path, \(p=p_{s}+ p_{t}\)\;
		\(k=\min⁡ \{\left(f_{mn} - l_{mn}\right) ,\forall \left(m,n\right) \in p\}\)\;
		\If {\(k < 0\)}
		{
			\ForEach{ edge \(\left(m,n\right) \in p\)}
			{
				\(f_{mn}+= 1\)\;
			}
		}
	}
\end{algorithm}

\begin{theorem}
	\label{theorem:2}
	Algorithm \ref{algo:4} ensures flow requirements are met.
\end{theorem}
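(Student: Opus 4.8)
The plan is to verify the two groups of feasibility conditions separately: flow conservation (\ref{eq:2}) and the lower-bound/capacity bounds (\ref{eq:1}). First I would record the single most useful structural fact about Algorithm~\ref{algo:4}: it starts from $f_{ij}=0$ everywhere and thereafter only \emph{increases} flows, always by pushing one unit along a complete path $p=p_{s}+p_{t}$ from $s$ to $t$. Since $G_{4}$ is acyclic, $p_{s}$ (from $s$ to $i$) and $p_{t}$ (from $i$ to $t$) can share no vertex other than $i$, for a common vertex $w\neq i$ would yield a cycle $i\to\cdots\to w\to\cdots\to i$; hence $p$ is a simple \(s-t\) path. Pushing one unit along a simple \(s-t\) path raises the in-flow and out-flow of every internal vertex it meets by exactly one and leaves every other vertex unchanged, so each augmentation preserves conservation. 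As the all-zero flow satisfies (\ref{eq:2}) trivially, condition (\ref{eq:2}) holds after every step and therefore at termination.

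Next I would establish the lower bounds. The only edges with $l_{ij}=1$ are the split edges $(v_{j}^{+},v_{j}^{++})$; all other edges have $l_{ij}=0\le f_{ij}$ automatically. Because $v_{j}^{+}$ has $(v_{j}^{+},v_{j}^{++})$ as its unique out-edge and $v_{j}^{++}$ has it as its unique in-edge, any \(s-t\) path routed through either endpoint must traverse this split edge. Thus in the iteration processing $i\in\{v_{j}^{+},v_{j}^{++}\}$, the chosen path $p$ contains $(v_{j}^{+},v_{j}^{++})$; if its flow is already at least $1$ the bound is met, and otherwise its flow is $0$, giving $f_{mn}-l_{mn}=-1$ there, which forces $k<0$ and triggers the augmentation that raises it to $1$. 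Since flows never decrease, once a split edge reaches value $\ge 1$ it stays there, so after the loop over all internal vertices every split edge satisfies $f_{ij}\ge l_{ij}=1$.

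The delicate part, and the step I expect to be the main obstacle, is the capacity bound $f_{ij}\le c_{ij}=\tfrac{|V_{4}|}{2}-1$, since a naive count of up to $|V_{4}|-2$ augmentations would overshoot it. Here I would argue by counting augmentations carefully. Observe that $k<0$ occurs \emph{exactly} when $p$ contains a split edge of flow $0$, because non-split edges contribute $f_{mn}\ge 0$ and a split edge of positive flow contributes $f_{mn}-1\ge 0$. Hence each augmentation raises at least one split edge from $0$ to $1$, and by monotonicity of the flow this transition can happen at most once per split edge over the whole run. Therefore the total number of augmentations is at most the number of split edges, which equals the number of internal vertices of $G_{3}$, namely $|V_{3}|-2=\tfrac{|V_{4}|}{2}-1$. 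Since each augmentation adds at most one unit to any fixed edge, the flow on every edge is bounded by the number of augmentations, i.e. by $\tfrac{|V_{4}|}{2}-1=c_{ij}$. This couples the number of augmenting \(s-t\) paths to the count of lower-bound edges and is precisely what makes the capacity value of Theorem~\ref{theorem:1} compatible with the initialization, completing the verification of (\ref{eq:1}).
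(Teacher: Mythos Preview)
Your proof is correct and follows the same overall decomposition as the paper: conservation is obtained because every augmentation pushes one unit along a complete \(s\text{--}t\) path (the paper phrases this as ``the \(m\) incoming edges will be visited \(n\) times \ldots\ the outgoing edge will also have a flow of \(n\)''), and the lower bound on the split edge \((v_j^{+},v_j^{++})\) is guaranteed because the path constructed in the iteration for \(v_j^{+}\) or \(v_j^{++}\) necessarily contains that edge.

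The one genuine difference is the capacity half of condition~(\ref{eq:1}). The paper merely asserts that a vertex lies on at most \(\tfrac{|V_4|}{2}-1\) of the augmenting paths, effectively deferring to Theorem~\ref{theorem:1} without explaining why the algorithm cannot perform more augmentations than that. Your counting argument---an augmentation is triggered only when the chosen path contains some split edge still at flow~\(0\), each split edge can trigger this at most once by monotonicity, and there are exactly \(|V_3|-2=\tfrac{|V_4|}{2}-1\) split edges---actually fills that gap. So your route is the same in spirit but strictly more complete, and it has the pleasant side-effect of showing that the capacity value of Theorem~\ref{theorem:1} is not just admissible but tight for this initialization.
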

\begin{proof}
Consider a vertex, \(i\), in \(G_{3}\). This vertex will now be represented by \(i^{+}\) and \(i^{++}\)in \(G_{4}\). A path from \(s\) to \(i^{++}\) will cover the edge \(\left(i^{+},i^{++}\right)\) since \(i^{++}\) is reachable only through \(i^{+}\). Thus, incrementing the flow along the path from \(s\) to \(i^{++}\) will ensure that the flow condition of \(l_{ij}=1\) for \(\left(i,j\right)= \left(i^{+},i^{++}\right)\) is met. By checking \(\forall i \in V_{4}\), ensures flow requirement (\ref{eq:1}) for all edges is met. The increments of the flow is done for every edge of a path from \(s\) to \(t\). Consider the vertex \(i^{+}\) in \(G_{4}\). Let it have \(m\) incoming edges. Since \(i^{+}\) is created by splitting vertex \(i\), \(i^{+}\) will have 1 outgoing edge. Let vertex \(i^{+}\) be part of an \(s-t\) path \(n\) number of times where \(1 \leq n \leq \frac{\vert{V_{4}}\vert}{2}-1\). The \(m\) incoming edges will be visited \(n\) times with each visit incrementing the flow by 1. Thus the sum  of flows on the incoming edges will be \(n\). Similarly, the outgoing edge will be visited \(n\) times and will also have a flow of \(n\). Thus, by incrementing the flow of every edge of an \(s-t\) path ensures flow requirement (\ref{eq:2}) is met. \qed
\end{proof}

Breadth first search has a complexity of \(O(\vert E\vert)\). Therefore Algorithm \ref{algo:4} has a complexity of \(O\left( \vert{E_{4}}\vert \vert{V_{4}}\vert\right)\).

Once an initial flow is placed on the flow graph, the minimum flow is computed using the decreasing path algorithm.

\subsubsection{Decreasing Path algorithm to find the minimum flow}
We use the generic Decreasing Path algorithm as detailed in \cite{Ciurea_Ciupala_2004}. The algorithm is based on the Augmenting Path concept of the Ford-Fulkerson algorithm \cite{ford1962flows}. The Decreasing Path concept can be stated as follows. Every edge in \(G_{4}\) is termed as a forward edge. Let this set be called \(E_{4}^{f}\).  For every forward edge, a new backward edge is introduced; i.e. a forward edge of the form \(\left(i,j\right)\), will have a backward edge of the form \(\left(j,i\right)\). Let this set of backward edges be called \(E_{4}^{b}\). For each backward edge, the lower bound, \(l_{ij}\), is set to \(0\) and the capacity, \(c_{ij}\) is set to \(\frac{\vert{V_{4}}\vert}{2}-1\). The residual capacity of an edge, \(r_{ij}, \forall \left(i,j\right) \in E_{4}^{f} \cup E_{4}^{b}\), is defined as follows:
\[
	r_{ij}= 
		\begin{cases}
			f_{ij} - l_{ij} & \text{if} \left(i,j\right) \in E_{4}^{f} \\
			c_{ij}- f_{ij} & \text{if} \left(i,j\right) \in E_{4}^{b}
		\end{cases}
\]

A decreasing path is a path from \(s\) to \(t\) where the residual capacity of every edge is greater than \(0\). If a decreasing path visits a forward edge, then the flow on the edge can be reduced. If on the other hand, the decreasing path visits a backward edge, then the flow on the corresponding forward edge has to be increased. The flow in the graph will be minimum when no more decreasing paths can be found. The minimum flow is the optimal solution as proved in \cite{Ciurea_Ciupala_2004}.

The algorithm to compute the minimum flow is presented as Algorithm \ref{algo:5}.

The complexity of the Decreasing Path algorithm can be determined using the technique in \cite{Ciurea_Ciupala_2004}. Each reduction of the flow in the network will need \(O\left(\vert{E_{4}}\vert\right)\) because of the path found through breadth first search. Since the maximum flow that may be initialized is \(\frac{\vert{V_{4}}\vert}{2}-1 \) (from Theorem \ref{theorem:1}), the minimum flow will be found in \(O\left(\vert{V_{4}}\vert \vert{E_{4}}\vert\right)\). This can be generalized to \(O\left( \vert{V}\vert \vert{E}\vert \right)\).

\begin{algorithm}[H]
	\caption {Decreasing Path Algorithm}
	\label {algo:5}
	\KwIn{\(G_{4}=\left(V_{4},E_{4},L,C\right)\), with initial flow}
	\KwOut{minimum flow \(f\) and flows on \(G_{4}\)}
	\ForEach{edge \(\left(i,j\right) \in E_{4}\)}
	{
		put \(\left(i,j\right)\) in \(E_{4}^{f}\)\;
		put \(\left(j,i\right)\) in \(E_{4}^{b};l_{ji}=0; c_{ji}= \frac{\vert{V_{4}}\vert}{2}-1 \)\;
	}
		\While{path, \(p\), exists from \(s\) to \(t\) using breadth-first-search such that \(r_{mn}>0,\forall \left(m,n\right) \in p\)}
		{
			\( r_{min} = \min \{ r_{mn} , \forall \left(m,n\right) \in p \}\)\;
			\ForEach {edge \(\left(m,n\right) \in p \)}
			{
				\eIf{ \(\left(m,n\right) \in E_{4}^{f}\)}
				{
					\(f_{mn} -= r_{min}\)\;
				}
				{
					\(f_{mn} += r_{min}\)\;
				}
					\(f_{nm} = f_{mn}\)\;
			}
		}
	output \(f=\sum_{j} f_{sj}, G_{4}\)\;
\end{algorithm}

The overall complexity of the minimum flow algorithm is the maximum of the complexity between the initialization and the decreasing path algorithm.

\begin{multline}
\max⁡\left(\text{initialization algorithm,decreasing path algorithm}\right)= \\ 
\max \left( O\left( \vert{V}\vert \vert{E}\vert \right),O\left( \vert{V}\vert \vert{E}\vert \right) \right) = O\left( \vert{V}\vert \vert{E} \vert \right)
\end{multline}

Thus, the complexity of our method of \(O\left( \vert{V}\vert \vert{E}\vert \right)\) improves upon the complexity of the generic minimum flow algorithm of \(O\left( \vert{V}\vert^{2} \vert{E}\vert \right)\).

For the running example, the minimum flow is computed as \(3\) which can be quite clearly observed from Fig. \ref{fig:4}. The minimum flow incidentally equals the lower bound for the example.

\subsection{Generating the Minimum Test Paths from the Minimum Flow} \label{section:minTestPaths}
The minimum number of \(s-t\) paths to cover all the Prime Paths is the flow, \(f\), in \(G_{4}\). The \(s-t\) paths in \(G_{4}\) and the corresponding paths in \(G_{1}\) are identified using Algorithm \ref{algo:6}.

Steps \ref{algo6:startPathG3} to \ref{algo6:endPathG3} identify the paths corresponding to the minimum flow in \(G_{3}\). For the running example, the \(path^{G3}=\{s,p_{4},t\},\{s,v_{3},p_{1},t\}\,\text{and}\, \{s,p_{0},v_{3},p_{2},t\}\). 

Vertices in \(path^{G3}\) which represent cycles in \(G_{2}\) are replaced. In the running example, there are no cycles to be introduced in the first path \(\{s,p_{4},t\}\). Replacing \(v_{3}\) in the second path gives \(\{s,p_{3},p_{5},v_{2},p_{3},p_{1},t\}\). Note that although \(v{3}\) represents the cycle - \(\{v_{2}, p_{3}, p_{5}, v_{2}\}\), it is replaced as \(\{p_{3}, p_{5}, v_{2},p_{3}\}\) such that an edge exists between \(s\) and \(p_{3}\). Step \ref{algo6:correctCycle} performs this operation. Replacing  \(v_{2}\) gives \(\{s,p_{3},p_{5},p_{7},v_{1},p_{6},p_{7},p_{3},p_{1},t\}\). Replacing \(v_{1}\) gives the path \(\{s,p_{3},p_{5}, \allowbreak p_{7},p_{9},p_{8}, \allowbreak  p_{9},p_{6},p_{7}, \allowbreak  p_{3},p_{1},t\}\). Similarly, after replacing the cycles in the third path, we get \(\{s,p_{0},p_{9},p_{8},\allowbreak p_{9},p_{6},p_{7}, \allowbreak p_{9},p_{8},\boldsymbol{p_{9}, \allowbreak p_{3}},p_{5},p_{7},\allowbreak  p_{9},p_{8},\allowbreak p_{9},p_{6},\boldsymbol{p_{7},\allowbreak p_{2}},t\}\). 

\begin{algorithm}[H]
	\caption {Identifying minimum Test Paths from minimum flow}
	\label {algo:6}
	\KwIn{\(G_{4}=\left(V_{4},E_{4}, L, C\right)\), with flow on each edge \& minimum flow \(f\)}
	\KwOut{All s-t paths, \(path^{G1}\), on \(G_{1}\) corresponding to the minimum flow}
	Remove all backward edges of \(G_{4}\). Merge vertices of the form \(\{v^{+},v^{++}\}\) into \(v\). Incoming edges of \(v\) = incoming edges of \(v^{+}\). Outgoing edges of \(v\) = outgoing edges of \(v^{++}\). Let the resulting graph be \(G_{3}\)\;
	\For {\(i = 0\) \KwTo \(f\)  }
	{
		remove all edges from \(G_{3}\) which have flow of \(0\)\; \label{algo6:startPathG3}
		find path, \(path^{G3}\), from \(s\) to \(t\) using breadth-first-search\;
		\ForEach{ edge\(\left(m,n\right) \in path^{G3}\) }
		{
			\(f_{mn} -= 1\) \;
		} \label{algo6:endPathG3}
		\While {\(path^{G3}\) contains a vertex, \(v^{G3}_{c}\), reduced from a cycle, \(c\) \label{algo6:startReduceCycles}}
		{
			Let the cycle be represented as \(c=\{v_{1}^{G2},v_{2}^{G2}, \ldots, v_{1}^{G2}\}\)\;
			replace \(v^{G3}_{c}\) with \(c=\{v_{j}^{G2},v_{j+1}^{G2}, \ldots, v_{j}^{G2}\}\) where \(v_{c-1}^{G3}\) is connected to \(v_{j}^{G2}\) through the information recorded in Algorithm \ref{algo:3} Step \ref{algo3:recordInfo}\; \label{algo6:correctCycle}
		} \label{algo6:endReduceCycles}
		\(path^{G2} = path^{G3} \)\;
		ensure an edge exists between every vertex of \(path^{G2}\) in \(G_{2}\)\; \label{algo6:ensureConnectivity}
	} 
	\ForEach {\(path^{G2}\) }
	{
		\ForEach {cycle, \(c^{'}\) in \(path^{G2}\) \label{algo6:startRemoveRedundancy} }
		{
			\If {number of occurrences of \(c^{'}\) in all paths \(path^{G2}>1\)}
			{
				except the first instance, replace all other instances of \(c^{'}\) in \(path^{G2}\) with the first vertex of \(c^{'}\)\; \label{algo6:removeMultiCycles}
			}
			let \(path_{sub}=c^{'}-\)first \& last vertices of \(c^{'}\)\;
			\If {number of occurrences of \(path_{sub}\) in all paths \(path^{G2}>1\)}
			{
				replace \(c^{'}\) in \(path^{G2}\) with the first vertex of \(c^{'}\)\; \label{algo6:removeMultiCyclesSub-Paths}
			}
		} \label{algo6:endRemoveRedundancy}
		initialize \(path^{G1}=s\)\; \label{algo6:startPathG1}
		\ForEach {vertex, \(v_{i}^{G2} \in path^{G2}\)}
		{
			\(path^{G1}= path^{G1} \cup  \text{prime path represented by} \,v_{i}^{G2}\)\;
		}
		\(path^{G1}= path^{G1} \cup t\) \; \label{algo6:endPathG1}
		Output \(path^{G1}\)\;
	} 
\end{algorithm}

Note that the vertices \(p_{9}\) and \(p_{3}\) in the sub-path \(\{p_{9},p_{3}\}\) of the third Test Path are not directly connected in \(G_{2}\). Similarly, the vertices \(p_{7}\) and \(p_{2}\) are not directly connected. This aspect of connectedness is taken care in step \ref{algo6:ensureConnectivity}.

As can be seen from the paths, redundancy exists. For example, the cycle \(\{p_{9},p_{8},p_{9}\}\) is present 3 times in the third path. We remove this redundancy by replacing such cycles with the first node of the cycle (\(p_{9}\) in this case) for all occurrences except the first (step \ref{algo6:removeMultiCycles}).
Also note that for the cycle \(\{p_{7},p_{9},p_{6},p_{7}\}\) the sub-path \(\{p_{9},p_{6}\}\) is occurring more than once. This implies the cycle can be again reduced to the first node of the cycle (step \ref{algo6:removeMultiCyclesSub-Paths}). Thus, the set of paths after removing redundancy is \(\{s,p_{4},t\},\{s,p_{3},p_{5},p_{7},p_{9},p_{8},p_{9},p_{6},p_{7},p_{3},p_{1},t\},\{s,p_{0},p_{9},p_{2},t\}\). 

The last step of the algorithm is to merge the Prime Paths to obtain the \(s-t\) paths on \(G_{1}\). This operation is performed by the operator \(\cup\) and is as explained in Section \ref{section:TransGrph}. 


Thus, the minimum \(s-t\) paths for the running example needed to cover all Prime Paths are \(\{s,1,2,t\}\),\(\{s,1,3,\allowbreak 4,1,3,4,5,\allowbreak 4,5,4,1,\allowbreak 3,4,1,2,t\}\) and \(\{s,1,3\allowbreak ,4,5,4, \allowbreak 1,2,t\}\). Observe that the Test Paths are long in length. The length of the Test Paths are the number of edges and equals \(27\) in this example.  

\subsection{Other Coverage Criteria \label{section:OthCovCrit}}
Our method of identifying the minimum number of Test Paths is generic and can cater to all of the structural coverage criteria of \cite{Ammann_Offutt_2008} as shown below.

\subsubsection{Edge-Pair Coverage Criterion}
For the Edge-Pair coverage criterion, the test requirement set, \(TR\), is the set of all paths of length at-most \(2\). An edge-pair can be represented as a path \(\{v_{i}, v_{j}, v_{k}\}\), where \(\left(v_{i}, v_{j}\right) \& \left(v_{j}, v_{k}\right) \in E\). The minimum number of Test Paths such that \(TR\) is covered can be obtained directly from our method. Algorithm \ref{algo:1} will have to be modified to generate the set of edge-pairs. Other algorithms can be used exactly as presented. Further, Lemma \ref{lemma:1} and the Theorems hold. From Lemma \ref{lemma:1}, the lower bound on the minimum number of Test Paths would be \(\max\left(\vert{Type S}\vert,\vert{Type T}\vert \right) \), where \(Type \, S\) and \(Type\,T\) are those test requirements that contain the vertex \(s\) and \(t\) respectively. 

\subsubsection{Simple and Complete Round Trip Coverage Criterion}
A Round Trip path is a Prime Path of \(Type \, C\). The test requirement of the Simple Round Trip coverage criterion contains at least one \(Type \, C\) Prime Path which begins and ends for a given vertex. The test requirement for Complete Round Trip coverage criterion contains all \(Type \, C\) Prime Paths for a given vertex. Therefore, the Round Trip coverage criteria focuses on a subset of the set of Prime Paths for a given graph. In our formulation, algorithm \ref{algo:1} can be suitably modified to choose the set of Prime Paths needed for the set \(TR\). The minimum number of Test Paths needed for this set of \(TR\) is directly obtained by the other algorithms. However, in this case of Round Trip coverage, Lemma \ref{lemma:1} will provide a value of \(0\) as the lower bound since there are no \(Type \, S\) or \(Type\,T\) test requirements. 

\subsubsection{Edge Coverage Criterion}
To handle Edge Coverage, there would be no need for algorithm \ref{algo:1} as the test requirements are directly available as \(E\). The other algorithms can be used exactly as presented. Algorithm \ref{algo:2} represents every edge as a vertex in the flow network. Algorithm \ref{algo:3} reduces the graph to an acyclic one. The minimum flow computation using algorithms \ref{algo:4} and \ref{algo:5} with algorithm \ref{algo:6} will give the minimum number of Test Paths needed to cover every edge. Lemma \ref{lemma:1} will equal the maximum of number of test requirements that contain \(s\) or \(t\) which in this case is the maximum of the out-degree of \(s\) and the in-degree of \(t\).

\subsubsection{Node Coverage Criterion}
Node Coverage can be handled by using a sub-set of the algorithms in our method. We directly use algorithms \ref{algo:3} to \ref{algo:6} to obtain the minimum number of Test Paths for node coverage. Lemma \ref{lemma:1} will equal the maximum of the number of test requirements that contain \(s\) or \(t\) which in this case will equal \(1\).

\section{Experimental Results \label{section:4}}

We have evaluated our method through two sets of tests. 
In the first test we use 18 graphs representing actual open-source software~\footnote{We thank Nan Li of \cite{Li_Li_Offutt_2012} for sharing the manually created graphs.} as test inputs. Since these graphs were also used as test inputs in \cite{Li_Li_Offutt_2012}, our results can be directly compared with that of \cite{Li_Li_Offutt_2012}. Tables \ref{table:expResults1} \& \ref{table:expResults2} provides the results of the first test.

The results show a significant reduction in the number of Test Paths generated over the methods of \cite{Ammann_Offutt_2008} and \cite{Li_Li_Offutt_2012}. Averaging over the 18 graphs, the number of Test Paths were reduced by 72.9\% over the solution of \cite{Ammann_Offutt_2008}. The reduction varied from a minimum of 16.3\% to a maximum of 89.9\%. Comparing with \cite{Li_Li_Offutt_2012}, the number of Test Paths from our solution reduced by 59.6\% on average and varied from a minimum benefit of 0\% to a maximum benefit of 91.4\%. 

Although our solution does not explicitly attempt to minimize the length of the Test Paths, the results show that in many cases, the Test Path length is reduced as well. The average reduction in the Test Path length on the 18 graphs was -9.4\% over \cite{Ammann_Offutt_2008} (i.e. on average, the test path length increased) and 38.5\% over \cite{Li_Li_Offutt_2012}. Note that in some cases, the minimization of the number of Test Paths has actually increased the length of the Test Paths.

The identification of the lower bound (Lemma \ref{lemma:1}) has performed exceedingly well and on average over the 18 graphs, was only 8.2\% outside the true value. 

The execution time of our method is significant with Algorithm 2 being the most costly. On average, the execution time of our method was 338 times that of \cite{Ammann_Offutt_2008}. In absolute terms, the average execution time was 3.6 minutes, but varied from a minimum of 5 ms to a maximum of 27.6 minutes.

\begin{table} 
\caption{Test Paths \& Execution Time (in ms) over graphs representing actual software}
\label{table:expResults1}
\begin{minipage}{\textwidth}
\begin{tabular}{ p{0.45\textwidth}| c | c | c | c | c | c | c | c | c |c | c}
	\hline
	Num. of Prime Paths & 9 &	11 &	27	&	27	&	35	&	38	&	46	&	63	&	69 & 78~\footnote{maps to 71 in results (Table II) of \cite{Li_Li_Offutt_2012} as our solution uses a single source and sink.} & 93~\footnote{similarly maps to 84 in results (Table II) of \cite{Li_Li_Offutt_2012}.} \\
	\hline
	Num. of Test Paths from Solution in \cite{Ammann_Offutt_2008} & 7 & 9 & 14 & 19 & 26 & 22 & 36 & 34 & 49 & 37 & 69\\
	Test Paths' Length from Solution in \cite{Ammann_Offutt_2008} & 46 & 56 & 200 & 181 & 230 & 210 & 426 & 506 & 544 & 675 & 771\\
	Num. of Test Paths from our Solution  & 5	& 7 & 11	& 14	& 20	& 12	& 17	& 11	& 41 & 18 & 54\\
	Lower Bound on Num. of Test Paths & 5	& 7	& 11	& 14	& 20	& 12	& 8	& 11	& 35 & 18 & 54\\
	Test Paths' Length from our Solution  & 33	 & 42	 & 159	 & 133	 & 223	 & 154	 & 294	 & 385	 & 527 & 764 & 1043\\
  \hline
Execution Time of Solution in \cite{Ammann_Offutt_2008} & 2 & 2 & 10 & 2 & 7 & 2 & 5 & 14 & 11 & 6 & 15\\
Execution Time of Algorithm 1 & 1 & 1 & 9 & 1 & 6 & 1 & 2 & 12 & 6 & 5 & 12\\
Execution Time of Algorithm 2 & 1 & 1 & 21 & 7 & 36 & 13 & 50 & 153 & 63 & 170 & 150\\
Execution Time of Algorithm 3 & 1 & 1 & 3 & 2 & 5 & 1 & 8 & 19 & 4 & 13 & 22\\
Execution Time of Algorithm 4 \& 5 & 1 & 1 & 2 & 1 & 6 & 1 & 2 & 5 & 6 & 1 & 4\\
Execution Time of Algorithm 6 & 1 & 1 & 11 & 4 & 26 & 7 & 21 & 48 & 23 & 38 & 40\\
Total Execution Time of our Solution & 5 & 5 & 46 & 15 & 79 & 23 & 83 & 237 & 102 & 227 & 228\\
\hline
\end{tabular}
\end{minipage}
\end{table}

In the second test, we created 4391 random graphs of varying complexity as test inputs. The Prime Paths of these graphs varied from 7 to 150. The number of Test Paths, the length of Test Paths and the execution time were recorded averaging over graphs of a particular Prime Path. The execution times were the average of 5 runs over the same graph. The results are shown in Fig. 6. Similar to the first test, we see a significant improvement in terms of number of Test Paths. On average, our method reduced the number of Test Paths by 70.9\%. The length of Test Paths were reduced by 2.7\% on average. The lower bound was away from the true value by only 0.6\% on average. On average, our method took 0.39 seconds to execute which was 105 times the execution time of \cite{Ammann_Offutt_2008}.

Overall, our method of minimizing the number of Test Paths provides good results. To mitigate the concern of the increase in execution time, the quality of the lower bound can be exploited (for example attempting the minimization only when the solution from \cite{Ammann_Offutt_2008} is over 5 times the lower bound).
\begin{table}
\caption{Test Paths \& Execution Time (in ms) over graphs representing actual software}
\label{table:expResults2}
\begin{minipage}{\textwidth}
\begin{tabular}{ l| c | c | c | c | c | c | c }
	\hline
	Num. of Prime Paths & 98 & 101 & 122 & 170 & 1074~\footnote{similarly maps to 1024 in results (Table II) of \cite{Li_Li_Offutt_2012}.} & 1141 & 1844\\
	\hline
	Num. of Test Paths from method in \cite{Ammann_Offutt_2008}& 67 & 60 & 80 & 102 & 933 & 954 & 885\\
	Test Paths' length from method in \cite{Ammann_Offutt_2008}& 1096 & 872 & 1577 & 1986 & 27457 & 37245 & 19828\\
	Num. of Test Paths from our Method & 41 & 47 & 20 & 42 & 362 & 96 & 102 \\
	Lower Bound on Num. of Test Paths & 31 & 47 & 20 & 28 & 362 & 65 & 102\\
	Test Paths' Length from our Method & 967 & 1152 & 1541 & 1789 & 34733 & 37358 & 21482 \\
  \hline
Execution Time of Solution in \cite{Ammann_Offutt_2008} & 11 & 11 & 14 & 48 & 3548 & 5884 & 1997 \\
Execution Time of Algorithm 1 & 9 & 9 & 11 & 26 & 2376 & 4454 & 748\\
Execution Time of Algorithm 2 & 306 & 253 & 1057 & 1386 & 634719 &1240553& 1315738\\
Execution Time of Algorithm 3 & 42 & 13 & 79 & 222 & 72520 & 167950 & 95775\\
Execution Time of Algorithm 4 \& 5 & 6 & 4 & 1 & 12 & 137 & 19 & 87\\
Execution Time of Algorithm 6 & 58 & 71 & 230 & 179 & 42628 & 106613 & 242745\\
Total Execution Time for our Method & 421 & 350 & 1378 & 1825 & 752380 & 1519589 & 1655093\\
\hline
\end{tabular}
\end{minipage}
\end{table}

\begin{figure}
		\centering
	\begin{minipage}{0.6\linewidth}
		\includegraphics[height=3.5cm]{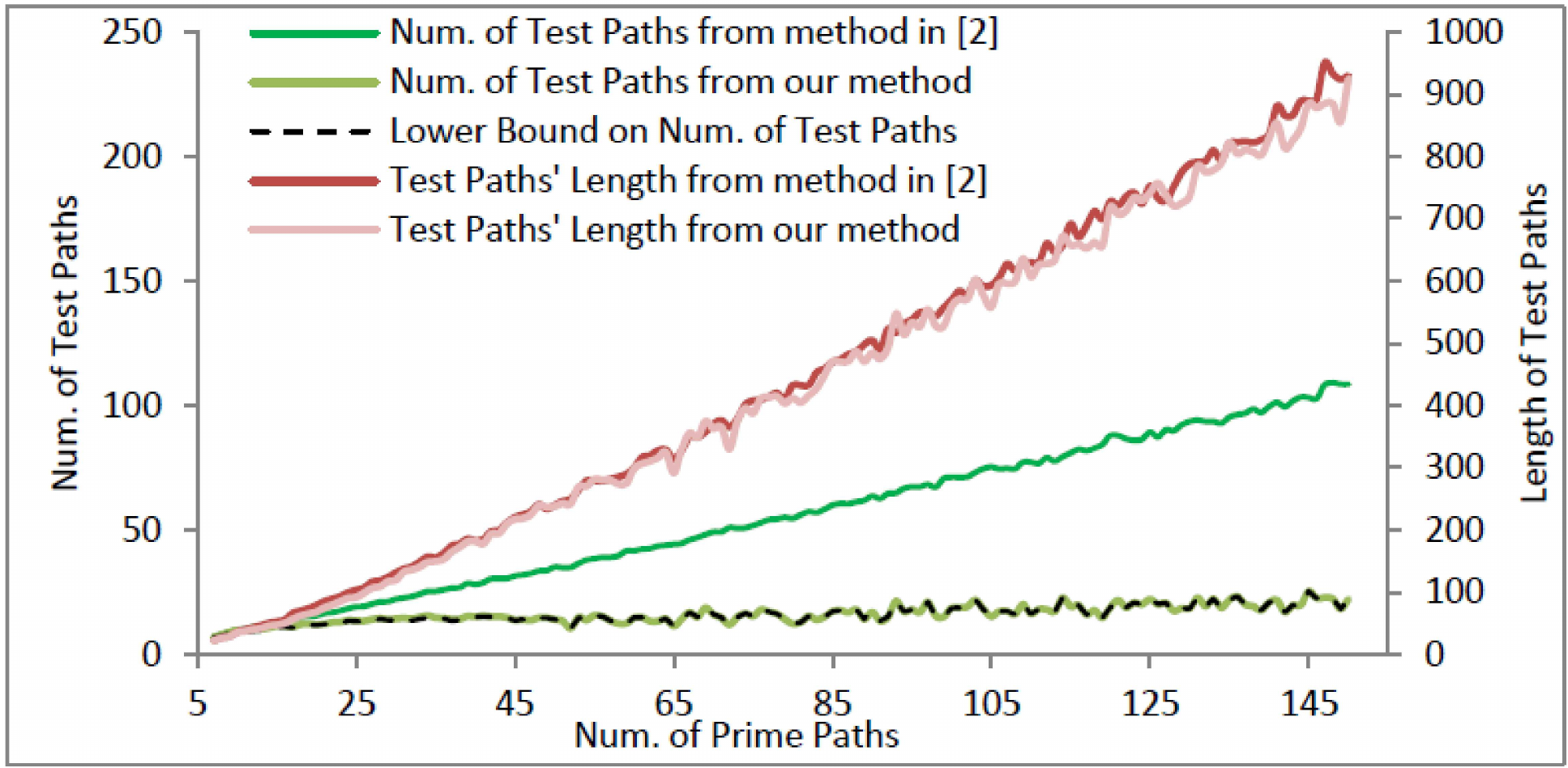}
	\end{minipage}%
	\begin{minipage}{0.35\linewidth}
		\includegraphics[height=3.5cm]{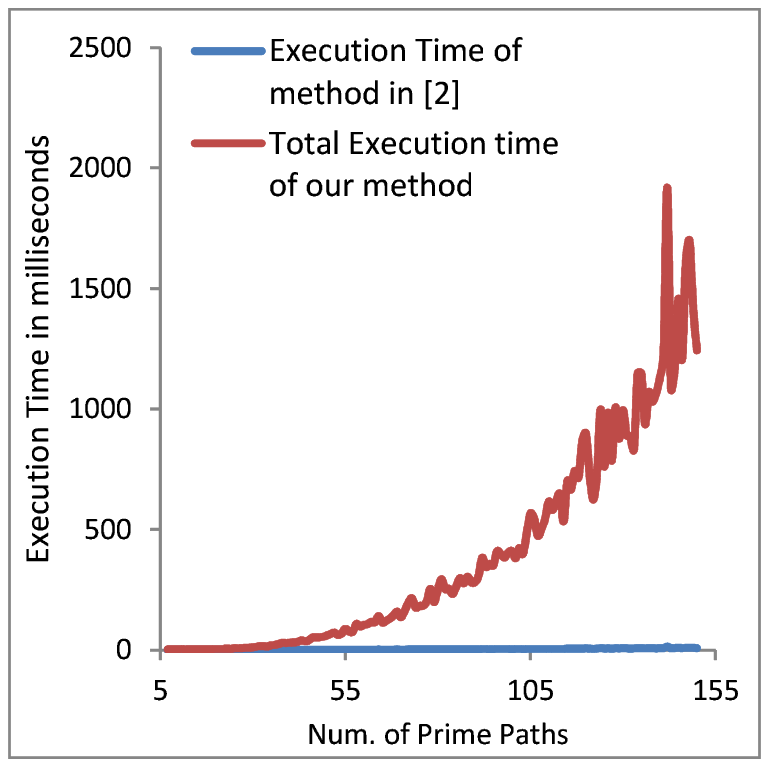}
	\end{minipage}
	\caption{Test Paths \& Execution Time (in ms) over randomly created graphs}	
	\label{fig:graph}
\end{figure}

\section{Conclusion \label{section:5}}
In this paper, we present a method to achieve the minimum number of Test Paths for 
Prime Path and other structural coverage criteria. The Prime Path criterion improves the quality of the Test Cases. However, even a small graph can have many Prime Paths and having a less number of Test Paths directly results in time saving over the Test Case execution. Our solution obtains an optimal solution with a time complexity of \(O\left( \vert{V}\vert\vert{E}\vert\right)\). This matches the best known time complexity till date. We have also presented a lower bound for the minimum number of Test Paths. Experimental results on graphs representing actual software and random graphs shows the superior performance of our method in terms of the number of Test Paths and the quality of the lower bound.
%
%
\bibliographystyle{splncs03}
\bibliography{bibliography_3}

\end{document}